\newtheorem{theorem}{Theorem}[section]
\newtheorem{lemma}[theorem]{Lemma}
\newenvironment{proof}[1][Proof]{\begin{trivlist}
\item[\hskip \labelsep {\bfseries #1}]}{\end{trivlist}}
\newcommand{\qed}{\nobreak \ifvmode \relax \else
      \ifdim\lastskip<1.5em \hskip-\lastskip
      \hskip1.5em plus0em minus0.5em \fi \nobreak
      \vrule height0.75em width0.5em depth0.25em\fi}
\begin{document}

\title{Numerical Approximation of Probability Mass Functions Via the Inverse Discrete Fourier Transform}
\author{
Richard L. Warr\footnote{Department of Mathematics and Statistics, Air Force Institute of Technology, Wright-Patterson Air Force Base, Ohio, USA.  The views expressed in this article are those of the authors and do not reflect the official policy or position of the United States Air Force, Department of Defense, or the U.S. Government.}
}
\maketitle
\date
\begin{abstract}
First passage distributions of semi-Markov processes are of interest in fields such as reliability, survival analysis, and many others.  The problem of finding or computing first passage distributions is, in general, quite challenging.  We take the approach of using characteristic functions (or Fourier transforms) and inverting them, to numerically calculate the first passage distribution.  Numerical inversion of characteristic functions can be numerically unstable for a general probability measure, however, we show for lattice distributions they can be quickly calculated using the inverse discrete Fourier transform.  Using the fast Fourier transform algorithm these computations can be extremely fast.  In addition to the speed of this approach, we are able to prove a few useful bounds for the numerical inversion error of the characteristic functions.  These error bounds rely on the existence of a first or second moment of the distribution, or on an eventual monotonicity condition.  We demonstrate these techniques in an example and include R-code.
\end{abstract}
\medskip
\noindent{\sc KEY WORDS: characteristic function, discrete Weibull, first passage distribution, fast Fourier transform, semi-Markov process, statistical flowgraph} 

\section{Introduction}
Statistical literature abounds with proofs using characteristic functions (CFs).  Asymptotic results such as the central limit theorem rely heavily on the properties of CFs.  However, when it comes to applied statistics the reverse is true.  In general statisticians seem very uncomfortable using the CF or numeric approximations of it, even though they routinely use numerical approximations and calculations for various other procedures.  This could be partly due to the fact that CFs are complex functions, but this is a deterrent based on fear of the unknown, not difficulty.

This is not the case in many applied sciences.  Engineers make heavy use of the Fourier transform and are quite comfortable using it in practice.  The Fourier transform and the characteristic function are the same function with a change of variables.  To use one over the other is a matter of preference, the curves, as drawn, are identical.  There are some areas in applied statistics where use of CFs are helpful.  One such area is semi-Markov processes (SMPs), specifically finding first passage distributions from one state to another, as in statistical flowgraph models \cite{flowgraphs}.  Using characteristic functions is a natural way to find first passage distributions in SMPs.  

For example consider the graphical representation of a SMP in Figure \ref{fig:1}, where the nodes on the graph represent a state, and the branches represent waiting time distributions before transition to the next state.  Each branch is labeled with a characteristic function $\varphi_{_{ij}}(s)$ and a probability of taking that branch.  The characteristic function of the first passage distribution is:
\begin{equation}
\varphi_{_{13*}}(s) = \frac{(1-p)\varphi_{_{12}}(s)\varphi_{_{23}}(s)}{1-p\varphi_{_{12}}(s)\varphi_{_{21}}(s)},
\label{eq:ftpas}
\end{equation}
the ``*" denotes a first passage.  For details on finding the first passage CF see \cite{flowgraphs} or for a more theoretical development refer to \cite{pyke2}.  Once the CFs for the branches are known, it is trivial to calculate the CF of the first passage from state $1$ to $3$. The challenge is to convert the CF to a distribution.  In most cases this must be done numerically.  

\begin{figure}[ht]
\begin{center}
\includegraphics[width=5in]{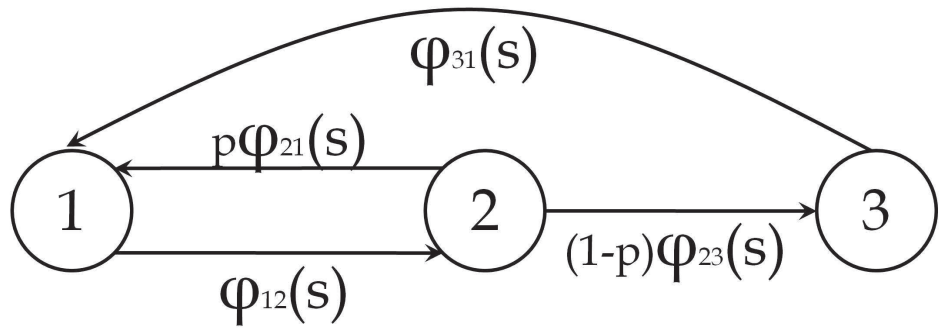}
\caption{An example semi-Markov process.}
\label{fig:1}
\end{center}
\end{figure}

Work has been done to show that the discrete Fourier transform (DFT) can accurately invert characteristic functions given sufficiently fast decay of both the density and the characteristic function (see \cite{FFTBounds}).  One class of distributions that has not been well studied for inversion using the DFT is discrete distributions.  Specifically we focus on distributions with support $n \Delta t$ for $n = 0, \pm1, \pm2, \ldots$ and $\Delta t \in \mathbb{R}^{+}$.  This type of distribution has been termed a \textsl{lattice} or an \textsl{arithmetic} distribution, these include most of the well studied discrete distributions such as the binomial, Poisson, negative binomial and many others.

The primary goal of this paper is to promote the use of the inverse discrete Fourier transform (iDFT) for the inversion of CFs to probability mass functions (PMFs).  A secondary benefit of using the DFT is easy calculation of CFs when they do not have a closed form.

This paper is outlined as follows. The next section details some properties of the DFT, in particular how the DFT is applied when dealing with lattice distributions.  Next, the DFT is shown in action, in an application.  We include code using the R software package (see \cite{R}). Finally, we conclude with a few comments.

\section{The discrete Fourier transform}
\label{sec:1}

For a random variable $T$, with distribution $F$, its characteristic function is
\begin{equation}
\varphi_{_T}(s) \equiv \int_{-\infty}^{\infty} e^{ist} \, dF. 
\end{equation}
Let $\hat{f}(\omega)$ be the Fourier transform of $T$, then the relationship between the characteristic function and the Fourier transform is 
\begin{equation}
\varphi_{_T}(-2\pi\omega) = \hat{f}(\omega).  
\end{equation}
In the reminder of the paper we use Fourier transforms, with the understanding that the CF is easily obtained from the FT.  We do this primarily for ease in applying DFT methods, specifically using the fast Fourier transform (FFT) algorithm.

The DFT can be used to approximate the Fourier transform and its inverse.  Under certain conditions, the error of this approximation can be controlled.  An excellent reference on the DFT is \cite{DFTMan}. To use the DFT, the support of the RV must be discretized at evenly spaced points, where $\Delta t$ is the space between adjacent points.  We term these points the support of the DFT samples.  Similarly the support of the iDFT samples the FT at evenly spaced intervals of width $\Delta \omega$ in the transform domain. These sampling widths are related by the reciprocity relation 
\begin{equation}
\Delta t \Delta \omega = \frac{1}{N}.
\label{eqn:Reciprocity}
\end{equation}
With this information we apply the DFT and iDFT to lattice distributions.
 
\subsection{DFT of lattice distributions}
In this section error bounds for lattice distributions with nonnegative support are derived.  If $f(t_n)$ is the PMF of a lattice distribution, then the DFT of that distribution is:
\begin{equation}
F_k \equiv \sum_{n=0}^{N-1} f(t_n) e^{-2 \pi i \frac{n k}{N}}, \text{ for } k = 0,1, \ldots, N-1.
\label{eqn:DFT}
\end{equation}
Clearly the utility of the DFT as an approximation of the FT is dependent on the size of $N$.  The major benefit of using the DFT with discrete random variables is that unlike RVs with PDFs, when sampling from lattice distributions we are able to capture essentially all the information about the random variable.  In this paper we use the word ``sampling" loosely to mean the PMF values of the support that are included in the DFT.  When sampling a RV with a PDF the measure of our finite number of samples is zero, whereas when sampling from a certain class of lattice distributions our ``sample" has measure that can approach one.  In essence, we are losing little or no information when sampling from a DRV, or in other words, with a finite number of points we obtain more of a ``census" than a ``sample" of where the probability is located in a distribution.  This allows the DFT to accurately approximate the FT at specified points given $N$ sufficiently large.

We confine our focus to nonnegative lattice RVs $T$ where, given any $\varepsilon > 0$, there exists an $N$ such that $\sum_{i=0}^{N-1} P(T=t_i) > 1-\varepsilon$.  Nearly all properly defined nonnegative lattice RVs fall into this class.  There may be a few pathological cases which do not, such as the counting measure on all the integers, however, in many cases one would argue that these are not properly defined distributions.  The assumption that $T$ is nonnegative is only slightly less general than the unrestricted case, and derivations for lattice RVs with unbounded support would be very similar.

To find the error bound for the DFT on lattice distributions one only needs to look at the definition of the Fourier transform.  Let $f(t_n) \equiv P(T = n \, \Delta t)$, then the definition of the Fourier transform is:

\begin{equation}
\hat{f}(\omega) \equiv \sum_{n=0}^{\infty} f(t_n) e^{-2 \pi i \omega n \Delta t } = \sum_{n=0}^{\infty} P(T=n \Delta t) e^{-2 \pi i \omega n \Delta t } 
\label{eqn:FT4DRVs}
\end{equation}

Now our goal is to bound the error $|\hat{f}(\omega_k) - F_k |$ for all index $k=0,1,\ldots,N-1$.  Therefore we must find an optimal $\omega_k$ that matches well with the DFT's $F_k$.  To do this we simply set the exponents of Equations \ref{eqn:DFT} and \ref{eqn:FT4DRVs} equal to each other and solve for $\omega_k$.  Therefore 
\begin{equation}
-2 \pi i \omega_k n \Delta t = -2 \pi i \frac{nk}{N},
\end{equation}
implies that
\begin{equation}
\omega_k = \frac{k}{ \Delta t N}.
\end{equation}
Now using the constraint that 
\[ \sum_{n=0}^{N-1} P(T= n \, \Delta t) = \sum_{n=0}^{N-1} f(t_n) \geq 1- \varepsilon,\]
we show that the error can be controlled.

\begin{lemma}
\label{lemma1}
For a nonnegative lattice random variable, defined on the support $n\, \Delta t$ for $n =0, 1,  \ldots, \infty$, the pointwise error of the forward DFT is less than or equal to $P(T \geq N \, \Delta t)$.
\end{lemma}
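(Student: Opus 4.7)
The plan is to exploit the fact that the author has already chosen $\omega_k = k/(\Delta t\, N)$ precisely so that the exponentials $e^{-2\pi i \omega_k n \Delta t}$ appearing in the Fourier transform (Equation \ref{eqn:FT4DRVs}) match the exponentials $e^{-2\pi i n k/N}$ appearing in the DFT (Equation \ref{eqn:DFT}) term-by-term for $n = 0, 1, \ldots, N-1$. Consequently, when I subtract $F_k$ from $\hat{f}(\omega_k)$, the first $N$ summands cancel exactly and what remains is only the tail of the Fourier series from $n = N$ onward.

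Concretely, I would write
\begin{equation*}
\hat{f}(\omega_k) - F_k = \sum_{n=N}^{\infty} f(t_n)\, e^{-2\pi i \omega_k n \Delta t}.
\end{equation*}
Then I would apply the triangle inequality and use $|e^{-2\pi i \omega_k n \Delta t}| = 1$ together with $f(t_n) \geq 0$ (since $f$ is a PMF), giving
\begin{equation*}
|\hat{f}(\omega_k) - F_k| \leq \sum_{n=N}^{\infty} f(t_n) = \sum_{n=N}^{\infty} P(T = n\,\Delta t) = P(T \geq N\,\Delta t),
\end{equation*}
which is the stated bound. Since the bound does not depend on $k$, it is uniform over all indices $0 \leq k \leq N-1$.

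There is really no single hard step here; the only subtlety is checking that the cancellation of the first $N$ terms is exact, which it is by construction of $\omega_k$. I should also briefly note that the tail sum is finite (in fact, can be made arbitrarily small) by the hypothesis introduced earlier in the section: for any $\varepsilon>0$ one can choose $N$ large enough that $P(T \geq N\,\Delta t) < \varepsilon$. This will allow the bound to be invoked later as a genuine error control, not merely a formal inequality.
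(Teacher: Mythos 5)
Your proposal is correct and follows essentially the same route as the paper's own proof: match the exponents via $\omega_k = k/(\Delta t\,N)$ so the first $N$ terms cancel, then bound the remaining tail by the triangle inequality and $|e^{-2\pi i nk/N}|=1$ to obtain $P(T \geq N\,\Delta t)$. No gaps.
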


\begin{proof}
\begin{eqnarray}
 \left|\hat{f}(\omega_k) - F_k \right| &=& \left|\sum_{n=0}^{\infty} P(T=n \Delta t) e^{-2 \pi i \omega_k n \Delta t} - \sum_{n=0}^{N-1} P(T=n \Delta t) e^{-2 \pi i \frac{n k}{N}} \right|      \nonumber \\
   &=& \left|\sum_{n=0}^{\infty} P(T=n \Delta t) e^{-2 \pi i \frac{n k}{N}} - \sum_{n=0}^{N-1} P(T=n \Delta t) e^{-2 \pi i \frac{n k}{N}} \right|      \nonumber \\
   &=& \left| \sum_{n=N}^{\infty} P(T=n \Delta t) e^{-2 \pi i \frac{n k}{N}} \right|      \nonumber \\
   &\leq&  \sum_{n=N}^{\infty}  P(T=n \Delta t) \left| e^{-2 \pi i \frac{n k}{N}} \right|   \ = \ \sum_{n=N}^{\infty} P(T=n \, \Delta t)  \nonumber \\
   &=& \sum_{n=N}^{\infty} f(t_n) \  =P(T \geq N \Delta t).  \nonumber \qed
\end{eqnarray} 
\end{proof}

To give an example, consider a Poisson random variable $U$, with rate parameter $\lambda=5$.  The FT of $U$ is $\hat{f}(\omega)=\exp\{5e^{-2\pi i\omega}-5\}$.  If we choose $N=16$, where $\Delta t=1$, then $\omega_k=k/16$ for $k = 0,1,\ldots,15$.  We set the error $\varepsilon = P(U > 15) \approx  0.000069$.  The greatest error occurs when computing $|\hat{f}(0/16)-F_{0}| \approx |1-0.999931| =  0.000069$.  Therefore we can see in this example that $\left|\hat{f}(\omega_k) - F_k \right| \leq \varepsilon$.


Now that we have shown that the DFT can accurately approximate the Fourier transform of a DRV, at specified points, we focus on the inverse DFT.  This is the more difficult of the two problems, but is in general more useful.  Given the Fourier transform of a DRV we now demonstrate how to accurately calculate the PMF.  



Given that $\hat{f}(\omega)$ is a known function, we show that the inverse DFT has the same error bound as the forward DFT.  The inverse DFT is defined as:
\begin{equation}
f_k \equiv \frac{1}{N}\sum_{n=0}^{N-1} \hat{f}(\omega_n) e^{2 \pi i \omega_n k \Delta t}. 
\label{eqn:idft}
\end{equation}

\begin{lemma}
\label{lemma2}
Given the Fourier transform of a nonnegative lattice RV, $T$, evaluated at points $k/(N \, \Delta t$), for $k=0, 1,  \ldots N-1$, the pointwise error of the inverse DFT is less than or equal to $P(T \geq N \, \Delta t)$.
\end{lemma}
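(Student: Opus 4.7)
The plan is to run the standard aliasing argument from discrete Fourier analysis. First I would substitute the series definition of $\hat{f}$ from Equation~\ref{eqn:FT4DRVs} into the iDFT formula~\ref{eqn:idft}. With $\omega_n = n/(N\,\Delta t)$, the outer exponent $2\pi i\,\omega_n k\,\Delta t$ collapses to $2\pi i\, nk/N$ and the inner exponent inside $\hat{f}(\omega_n)$ collapses to $-2\pi i\, nm/N$. This rewrites $f_k$ as a double sum
\[
f_k = \frac{1}{N}\sum_{n=0}^{N-1}\sum_{m=0}^{\infty} f(t_m)\, e^{2\pi i n(k-m)/N}.
\]

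Next I would swap the order of summation, which is justified by absolute convergence since $f(t_m)\ge 0$ and $\sum_m f(t_m) \le 1$. The inner sum $\sum_{n=0}^{N-1} e^{2\pi i n(k-m)/N}$ is the standard root-of-unity sum, equal to $N$ when $m \equiv k \pmod{N}$ and $0$ otherwise. Consequently
\[
f_k = \sum_{j=0}^{\infty} f(t_{k+jN}) = f(t_k) + \sum_{j=1}^{\infty} f(t_{k+jN}),
\]
so the iDFT returns the true PMF value $f(t_k) = P(T = k\,\Delta t)$ contaminated only by the aliased tail masses at $k+N, k+2N, \ldots$.

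The bound then drops out immediately: the pointwise error satisfies
\[
\bigl|f_k - f(t_k)\bigr| \;=\; \sum_{j=1}^{\infty} f(t_{k+jN}) \;\le\; \sum_{n=N}^{\infty} f(t_n) \;=\; P(T \ge N\,\Delta t),
\]
since the aliased indices $\{k+N, k+2N, \ldots\}$ are a subset of $\{N, N+1, \ldots\}$ for any $k \in \{0,\ldots,N-1\}$.

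The only real obstacle is recognizing and applying the root-of-unity orthogonality cleanly; once that is in place, nonnegativity of the PMF takes care of the rest. The structure mirrors the proof of Lemma~\ref{lemma1}, with truncation replaced by aliasing as the mechanism producing the same tail bound $P(T \ge N\,\Delta t)$.
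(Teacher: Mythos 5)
Your proposal is correct and follows essentially the same route as the paper's proof: substitute the series for $\hat{f}(\omega_n)$ into the iDFT, interchange sums, apply root-of-unity orthogonality to obtain the aliasing identity $f_k = P(T=k\,\Delta t) + \sum_{j\ge 1} P(T=(k+jN)\,\Delta t)$, and bound the aliased tail by $P(T\ge N\,\Delta t)$. The only cosmetic difference is that you write the aliased terms as a sum over $j\ge 1$ of $f(t_{k+jN})$ while the paper keeps them as a sum over $j\ge N$ with an indicator $I_{(k=j\bmod N)}$; these are identical.
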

We introduce a couple of notational items to help with the following proof.  Let ``mod" be the modulo operator, so for the positive integer $b$, and non-negative integers $a$, $n$ and $r$; then $r = a \text{ mod } b$, where $n$ is the largest non-negative integer that satisfies the equation $a=r+nb$.  This implies that $r<b$.  Also, let
\begin{equation}
I_{(\text{condition})} = \left\{ \begin{array}{ll}
1 & \mbox{if ``condition" is true} \\
0 & \mbox{otherwise}
\end{array}
\right.
\end{equation}
We'll call $I_{(\text{condition})}$ the indicator function.
\begin{proof} 
Using Equations \ref{eqn:FT4DRVs} and \ref{eqn:idft} we can write the iDFT as:
\begin{eqnarray}
f_k &=& \frac{1}{N} \sum_{n=0}^{N-1} \sum_{j=0}^{\infty} P(T=j \Delta t) e^{-2 \pi i \omega_n j \Delta t} e^{2 \pi i \omega_n k \Delta t}      \nonumber \\
&=& \frac{1}{N} \sum_{n=0}^{N-1} \sum_{j=0}^{\infty} P(T=j \Delta t) e^{-2 \pi i \frac{nj}{N}} e^{2 \pi i \frac{nk}{N}}      \nonumber \\
&=& \frac{1}{N} \sum_{j=0}^{\infty} P(T=j \Delta t) \sum_{n=0}^{N-1}  e^{-2 \pi i \frac{nj}{N}} e^{2 \pi i \frac{nk}{N}}      \nonumber \\
&=& \frac{1}{N} \sum_{j=0}^{\infty} P(T=j \Delta t) \, N \, \text{I}_{(k= j\text{ mod }N)}      \nonumber \\
&=&  \sum_{j=0}^{N-1} P(T=j \Delta t) \, \text{I}_{(k= j\text{ mod }N)}  +    \sum_{j=N}^{\infty} P(T=j \Delta t) \, \text{I}_{(k= j\text{ mod }N)} \nonumber \\
&=&  P(T=k \Delta t)  +    \sum_{j=N}^{\infty} P(T=j \Delta t) \, \text{I}_{(k= j\text{ mod }N)}. \nonumber 
\end{eqnarray}
Therefore we can now express the error as:
\begin{eqnarray}
\left| f_k - P(T=k \Delta t)  \right| & = & \left| P(T=k \Delta t) + \sum_{j=N}^{\infty} P(T=j \Delta t) \, \text{I}_{(k= j\text{ mod }N)} - P(T=k \Delta t)  \right| \nonumber \\
& \leq & P(T \geq N \Delta t).  \nonumber \qed
\end{eqnarray}
\end{proof}
This result is not surprising, because of the close relationship between the forward and inverse DFT.

The next result assumes the the FT is unknown and must be approximated by the DFT, following which, the iDFT is used to invert the approximated FT to a PMF.  This theorem shows this error is also bounded.

\begin{theorem}
\label{thm1}
Given that the FT of a nonnegative lattice random variable is known up to a fixed error bound, then the error bound of the iDFT of this FT can be bounded by $2P(T \geq N \, \Delta t)$.
\end{theorem}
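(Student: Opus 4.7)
The plan is to treat the total error as a composition of two sources and handle each separately before combining them with the triangle inequality. Let $\tilde{\varphi}(\omega_n)$ denote the approximation of $\hat{f}(\omega_n)$ that is available to us, with $|\tilde{\varphi}(\omega_n) - \hat{f}(\omega_n)| \leq P(T \geq N\, \Delta t)$ for every $n = 0, 1, \ldots, N-1$ (this is precisely the bound Lemma \ref{lemma1} would provide if $\tilde{\varphi}$ were itself produced by the forward DFT). Let $\tilde{f}_k$ be the iDFT applied to $\tilde{\varphi}(\omega_n)$, and let $f_k$ be the iDFT applied to the exact samples $\hat{f}(\omega_n)$. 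Then I would write
\begin{equation*}
\bigl|\tilde{f}_k - P(T=k\,\Delta t)\bigr| \;\leq\; \bigl|\tilde{f}_k - f_k\bigr| \;+\; \bigl|f_k - P(T=k\,\Delta t)\bigr|
\end{equation*}
and attack the two pieces independently.

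For the second piece, Lemma \ref{lemma2} already supplies the bound $|f_k - P(T=k\,\Delta t)| \leq P(T \geq N\, \Delta t)$, so nothing further is needed. For the first piece, I would substitute the definition of the iDFT from Equation \ref{eqn:idft} and pull the difference inside the sum:
\begin{equation*}
\bigl|\tilde{f}_k - f_k\bigr| \;=\; \left| \frac{1}{N}\sum_{n=0}^{N-1} \bigl(\tilde{\varphi}(\omega_n) - \hat{f}(\omega_n)\bigr) e^{2 \pi i \omega_n k \Delta t} \right| \;\leq\; \frac{1}{N} \sum_{n=0}^{N-1} \bigl|\tilde{\varphi}(\omega_n) - \hat{f}(\omega_n)\bigr|.
\end{equation*}
Because each complex exponential has modulus one and each FT-sample error is bounded by $P(T \geq N\,\Delta t)$, the sum of $N$ such terms divided by $N$ collapses to exactly $P(T \geq N\,\Delta t)$. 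Adding the two contributions yields the claimed $2P(T \geq N\,\Delta t)$ bound.

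The only subtlety, and the step most worth emphasizing, is the non-amplification of the FT approximation error through the iDFT: one needs that the iDFT kernel has entries of modulus one and that the $1/N$ normalization exactly cancels the $N$ summands in the worst case. Aside from that, the argument is essentially two invocations of the triangle inequality together with the two preceding lemmas, so no new machinery is required.
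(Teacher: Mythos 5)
Your proposal is correct and is essentially the paper's own argument: the paper writes the available samples as $\hat{f}(\omega_n)+\delta_n$ with $|\delta_n|\leq\varepsilon$ and expands the iDFT into the exact-sample contribution (bounded via Lemma \ref{lemma2}) plus the noise term $\frac{1}{N}\sum_n \delta_n e^{2\pi i nk/N}$ (bounded by $\varepsilon$ since the kernel has unit modulus), which is exactly your triangle-inequality decomposition into $|f_k - P(T=k\,\Delta t)|$ and $|\tilde{f}_k - f_k|$. No substantive difference.
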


\begin{proof}
Mimicking the proof of Lemma \ref{lemma2} we replace $\hat{f}(\omega_n)$ with $\hat{f}(\omega_n)+\delta_n$, where $\varepsilon=P(T \geq N \, \Delta t)$ and $|\delta_n|\leq \varepsilon$.  This gives:
\begin{eqnarray}
f_k &=& \frac{1}{N} \sum_{n=0}^{N-1} \left[ \hat{f}(\omega_n)+\delta_n \right] e^{2 \pi i \omega_n k \Delta t}      \nonumber \\
&=& \frac{1}{N} \sum_{n=0}^{N-1}  \hat{f}(\omega_n) e^{2 \pi i \frac{nk}{N}} + \frac{1}{N} \sum_{n=0}^{N-1} \delta_n  e^{2 \pi i \frac{nk}{N}}     \nonumber \\
&=& P(T=k \Delta t) + \sum_{j=N}^{\infty} P(T=j \Delta t) \, \text{I}_{(k= j\text{ mod }N)} + \frac{1}{N} \sum_{n=0}^{N-1} \delta_n  e^{2 \pi i \frac{nk}{N}}      \nonumber  
\end{eqnarray}
Therefore if we only have an approximation for $\hat{f}(\omega_n)$ then,
\begin{eqnarray}
 \left| f_k - P(T=k \Delta t)  \right| & = & \left|  \sum_{j=N}^{\infty} P(T=j \Delta t) \, \text{I}_{(k= j\text{ mod }N)} + \frac{1}{N} \sum_{n=0}^{N-1} \delta_n  e^{2 \pi i \frac{nk}{N}}   \right| \nonumber \\
 &\leq&  \left|  \sum_{j=N}^{\infty} P(T=j \Delta t) \, \text{I}_{(k= j\text{ mod }N)} \right| + \left| \frac{1}{N} \sum_{n=0}^{N-1} \delta_n  e^{2 \pi i \frac{nk}{N}}   \right| \nonumber \\
 &\leq& \varepsilon +  \frac{1}{N} \sum_{n=0}^{N-1} \left| \delta_n \right| \left| e^{2 \pi i \frac{nk}{N}}   \right| \nonumber \\
 &\leq& \varepsilon +  \frac{\varepsilon}{N} \sum_{n=0}^{N-1}  1 \ = \ 2\varepsilon.  \nonumber \qed 
\end{eqnarray}
\end{proof}

What we have not addressed, thus far, is the error introduced when manipulating FTs in the Fourier domain.  The primary reason we use CFs and FTs is for the convenient theoretical properties but when $\hat{f}(\omega_n)$ is estimated, then when mathematical manipulation such as multiplication or division occurs the error is magnified.  We do not specifically address this problem, but warn the practitioner of this added source of error.  Additionally, machine precision error also plays a part in the overall error bound.

\subsection{Finding the inversion error bounds}

The primary problem when using the inverse DFT is that one may not know \textit{a priori} $N$ such that
\[\sum_{n=0}^{N-1} P(T= n \Delta t) \geq 1- \varepsilon,\]
for a given $\varepsilon$.  So although the error is bounded, this bound is not obvious without the exact PMF, which we do not usually have.  There are a few approaches to overcome this difficulty.  One solution is to use an inequality to bound $P(T> n\Delta t)$.  One bound is Markov's inequality:
\[ P(T \geq a) \leq \frac{E(T)}{a}, \ \ \text{ for } a > 0.  \]
Therefore if the first moment exists for the RV $T$, and we either know or can estimate it, this method provides a conservative error bound.  The primary issue is finding $E(T)$.  In our case the random variable $T$'s distribution is a first passage distribution in a semi-Markov process.  It turns out if all the expectations to the individual transitions in the SMP are known, then finding the first passage expectation is the solution to a set of linear equations. Literature exists that explains how one can find the moments of first passage distributions, for example see \cite{SMPFirstPass} and \cite{MPFirstPass}.  

Another inequality that can assist in bounding the error is Cantelli's inequality.  For a random variable $T$ with $\mu=E(T)$, $\sigma^2=Var(T)$, and $k>0$ we have
\[ P(T \geq k\sigma +\mu) \leq \frac{1}{1+k^2}.  \]
Therefore if $Var(T) < \infty$, and given $E(T)$ and $Var(T)$, Cantelli's inequality can be useful in bounding the DFT error.  Where $k > \sigma/\mu$, Cantelli's inequality provides a smaller bound and should be used over Markov's inequality.  Finding the variance of a first passage random variable is similar to the expectation, again see \cite{MPFirstPass} for details.

If the two inequalities prove to be impractical, then another approach can be taken if certain assumptions can be made about the distribution of $T$.  
\begin{theorem}
\label{thm2}
If for some index $M$ where the following inequalities are satisfied $P(T=n \Delta t) \leq P(T=m \Delta t)$ and $M<m<n$, then the error for the iDFT can be expressed as:
Choose an even $N$ (the number of DFT samples) such that $N > 2M$, then for all $n= 0,1, \ldots, (N/2-1)$
\[ \left| f_n - P(T=n \Delta t)  \right| \leq f_{N/2+n}. \]
\end{theorem}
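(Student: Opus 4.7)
The plan is to reuse the intermediate identity developed in the proof of Lemma~\ref{lemma2}, namely
\[
f_k = P(T = k\,\Delta t) + \sum_{\ell=1}^{\infty} P\!\left(T = (k+\ell N)\,\Delta t\right),
\]
which follows from recognizing that $\{j \geq N : j \equiv k \!\!\mod N\} = \{k+\ell N : \ell \geq 1\}$. Applied at $k = n$ this gives the error as a pure tail series,
\[
\bigl|f_n - P(T = n\,\Delta t)\bigr| = \sum_{\ell=1}^{\infty} P\!\left(T = (n+\ell N)\,\Delta t\right),
\]
while applied at $k = N/2 + n$ it gives
\[
f_{N/2+n} = P\!\left(T = (N/2+n)\,\Delta t\right) + \sum_{\ell=1}^{\infty} P\!\left(T = (N/2+n+\ell N)\,\Delta t\right).
\]
The strategy is then to dominate the first series by the second one, termwise.

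Concretely, I would pair the $\ell$-th term of the error series with the $(\ell-1)$-th term of the right-hand tail, where the ``$\ell=1$'' entry on the right is taken to be the explicit leading probability $P(T = (N/2+n)\,\Delta t)$. For each $\ell \geq 1$ the index $n + \ell N$ on the error side exceeds the paired index $n + (\ell - \tfrac{1}{2})N$ on the $f_{N/2+n}$ side by exactly $N/2 > 0$. The eventual monotonicity assumption then supplies
\[
P\!\left(T = (n + \ell N)\,\Delta t\right) \leq P\!\left(T = (n + (\ell - \tfrac{1}{2})N)\,\Delta t\right),
\]
and summing these inequalities over $\ell \geq 1$ yields exactly $|f_n - P(T=n\,\Delta t)| \leq f_{N/2+n}$.

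The one hypothesis-checking step is to confirm that every index in the comparison lies in the monotone tail, i.e.\ strictly exceeds $M$. The smallest such index is $N/2 + n$, attached to the leading term of $f_{N/2+n}$; because $n \geq 0$ and $N > 2M$, this already satisfies $N/2 + n \geq N/2 > M$, and all other paired indices are larger. So the termwise comparison is legitimate throughout.

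The main obstacle is essentially bookkeeping: writing out the two series so that the $N/2$-offset between compared indices (which is what forces $N$ to be even) becomes visible, and verifying that $N > 2M$ keeps us safely inside the monotone regime. No analytical ingredient beyond the identity from Lemma~\ref{lemma2} and the stated eventual-monotonicity assumption is needed.
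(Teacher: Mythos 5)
Your proposal is correct and follows essentially the same route as the paper: both start from the identity $f_k = \sum_{j=0}^{\infty} P(T=(k+jN)\,\Delta t)$ inherited from Lemma~\ref{lemma2}, and both dominate the tail error series termwise by the series for $f_{N/2+n}$ using the $N/2$-offset and eventual monotonicity. Your explicit check that every compared index exceeds $M$ (via $N/2+n \geq N/2 > M$) is a small point the paper leaves implicit, but the argument is otherwise the same.
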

Essentially what this is saying is that after a certain point, if the first passage PMF, $f(t)$, is monotonically nonincreasing, then we can bound the error using that fact alone.  This technique is certainly not universally applicable, but it does provide a working error bound if monotonicity eventually occurs.  To our knowledge PMFs used for modeling processes will often meet this assumption.

\begin{proof}
From the proof of Lemma \ref{lemma2} we know
\[ f_n = \sum_{j=0}^{\infty} P(T=j \Delta t) \, \text{I}_{(n= j\text{ mod }N)}, \]
Which can be rewritten as
\[  f_n = \sum_{j=0}^{\infty} P(T=(n+jN) \Delta t).   \]
Which implies
\[  f_{N/2+n} = \sum_{j=0}^{\infty} P(T=(N/2+n+jN) \Delta t)   \]
therefore 
\[ \left| f_n - P(T=n \Delta t)  \right| = \sum_{j=1}^{\infty} P(T=(n+jN) \Delta t)  \]
but by monotonicity we know
\[  P\left(T=\left(N/2+n+jN\right) \Delta t \right) \geq P\left(T=\left(n+(j+1)N\right) \Delta t \right)  \text{ for all } j = 0, 1, 2 \ldots  \]
This implies 
\[  f_{N/2+n} \geq  \sum_{j=1}^{\infty} P(T=(n+jN) \Delta t) .   \]
Therefore 
\[ \left| f_n - P(T=n \Delta t)  \right| \leq f_{N/2+n}. \qed \]
\end{proof}

This bound can be the more effective then the previous bounds, if the condition exists for it to be applied.  However, it is difficult to prove the monotonicity condition in a particular situation, and if sufficient doubt exists, another error bounding method should be applied.  

Before we move to the application section, we suggest an even broader condition than eventual monotonicity that will provide the same error bound.  This condition is: 
\begin{theorem}
\label{thm3}
If there exists some nonnegative integers $n$ and $N/2$ and a positive integer $k$, such that, if $n<N/2$, and $P\left(T=\left(n+kN/2\right) \Delta t \right) \geq P\left(T=\left(n+(k+1)N/2\right) \Delta t \right)$ then, for a DFT with $N$ samples,
\[ \left| f_n - P(T=n \Delta t)  \right| \leq f_{n+N/2}. \]
\end{theorem}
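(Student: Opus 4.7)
The plan is to mirror the proof of Theorem \ref{thm2}, since the hypothesis here is essentially a weakening of eventual monotonicity: we only require domination along the arithmetic subsequence with stride $N/2$, rather than between every consecutive pair of support points. The structural identities are identical; only the pairing of terms changes.

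First, I would reuse the identity established inside the proof of Lemma \ref{lemma2}: for every integer $m$ with $0 \leq m < N$,
\[f_m \;=\; \sum_{j=0}^{\infty} P\bigl(T = (m + jN)\,\Delta t\bigr).\]
Specializing at $m = n$ and at $m = n + N/2$ gives immediately
\[\bigl|f_n - P(T=n\Delta t)\bigr| \;=\; \sum_{j=1}^{\infty} P\bigl(T=(n+jN)\Delta t\bigr), \qquad f_{n+N/2} \;=\; \sum_{j=0}^{\infty} P\bigl(T=(n+N/2+jN)\Delta t\bigr).\]

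Next, I would re-index both series in units of $N/2$. The error tail is supported on the even strides $\ell = 2,4,6,\ldots$, while the bounding series is supported on the odd strides $\ell = 1,3,5,\ldots$. Matching the $j$-th summand of $f_{n+N/2}$ with the $(j{+}1)$-th summand of the error series reduces the desired inequality to
\[P\bigl(T = (n + (2j+1)N/2)\Delta t\bigr) \;\geq\; P\bigl(T = (n + (2j+2)N/2)\Delta t\bigr) \qquad \text{for every } j = 0,1,2,\ldots,\]
which is exactly one instance of the hypothesis taken at the odd value $k = 2j+1$. Summing these pointwise comparisons term by term yields $|f_n - P(T=n\Delta t)| \leq f_{n+N/2}$.

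The main obstacle is a quantifier subtlety in the statement of the theorem: as written, the inequality is imposed for a single $k$, but the series comparison requires it for every odd positive integer $k$. I would therefore begin the proof by making this quantification explicit (reading the hypothesis as holding for every admissible $k$, as suggested by the analogy with Theorem \ref{thm2}). Once that is settled, no new analytic content is needed beyond the re-indexing argument just described; the proof collapses to bookkeeping over the two interleaved arithmetic subsequences.
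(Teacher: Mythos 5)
Your proposal is correct and is essentially the argument the paper intends: the paper gives no separate proof of Theorem \ref{thm3}, stating only that it ``follows the same argument as Theorem \ref{thm2},'' and your re-indexing of the two interleaved stride-$N/2$ subsequences is exactly that adaptation. You are also right to flag the garbled quantifiers in the theorem statement --- the hypothesis must hold for every (odd) positive integer $k$ for the term-by-term comparison to go through, and making that explicit is the correct reading.
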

These types of distributions might be termed \textit{eventually periodic nonincreasing}.  This condition is very broad and can be applied in nearly all first passage distributions, although we are sure there exist some pathological cases that do not fit in this class of distributions.  The proof of Theorem \ref{thm3} follows the same argument as Theorem \ref{thm2}.

In the next section we demonstrate using the DFT to calculate FTs, their inversion, and also it's error bounds.

\section{An application of discrete time semi-Markov processes}
Consider the problem described in \cite{barbu}, the waste treatment for a textile factory.  If the treatment facility is working then the waste can be discarded; if the unit fails the untreated waste must be stored in a holding tank until the treatment facility is again functioning.  However, if the repair to the treatment facility takes too long the holding tank becomes full, and the textile factory must halt production until the repair is complete.  Figure \ref{fig:2} is a graphical depiction of the process.

\begin{figure}[ht]
\begin{center}
\includegraphics[width=5in]{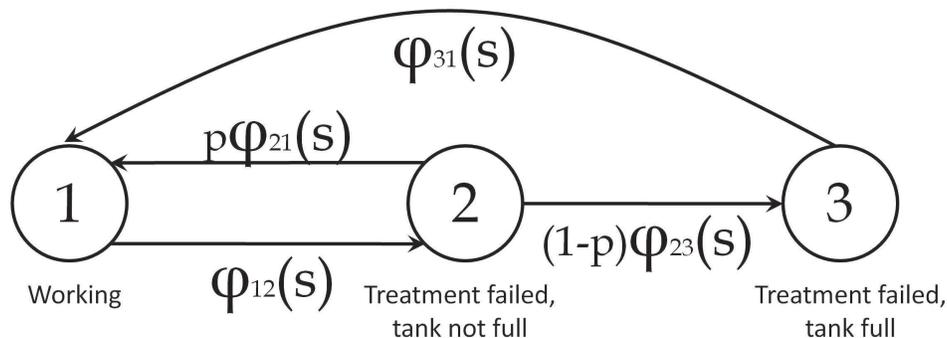}
\caption{A graphical depiction of the textile waste treatment process.}
\label{fig:2}
\end{center}
\end{figure}

We use the model parameterization of \cite{barbu}, which is:
\begin{eqnarray}
T_{12} &\sim& \text{geometric}(0.8),      			\nonumber \\
T_{21} &\sim& \text{discrete Weibull}(0.3,0.5),      \nonumber \\
T_{23} &\sim& \text{discrete Weibull}(0.5,0.7), \text{ and}      \nonumber \\
T_{31} &\sim& \text{discrete Weibull}(0.6,0.9).      \nonumber 
\end{eqnarray}

The discrete Weibull as defined in \cite{DiscreteWeibull} (slightly modified) is
\[ f(t|q,b) = \begin{cases} q^{(t-1)^b}-q^{t^b} &\mbox{if } t \in \{1,2, \ldots \} \\ 
                            0                   &\text{otherwise.} \end{cases} \]
The SMP parameter $p$ (see Figure \ref{fig:2}) is set at 0.95, or in words, on average 95\% of the repairs to the treatment facility are completed before the tank becomes full.

A few of the quantities that may be of interest from this process are the first passage PMF from state $1$ to state $3$, the expected amount of time spent in state $3$ at some time $t$, and the probability the factory is shutdown due to waste treatment failure at some time $t$.  The formulas for these quantities can be found in \cite{WarrCollins1}.  For this paper we focus on finding the first passage PMF from state $1$ to state $3$.

The first passage CF is given in Equation \ref{eq:ftpas}, where we replace CFs with FTs in all instances.  So to calculate the first passage FT from state 1 to state 3, we need the FT of $T_{12}$, $T_{21}$, and $T_{23}$.  The FT of geometric RV $T_{12}$ is known in closed form, however, the FT for the discrete Weibull is not.  So in our computations we use the exact FT for $T_{12}$ and use the DFT approximation for the other two.

We first want to calculate the mean and variance of $T_{12}$, $T_{21}$, and $T_{23}$ which will allow us to use the Cantelli inequality to bound our error.  Table \ref{tab1} shows the expectations and variances for these RVs.
\begin{table}
  \centering
    \caption{The mean and variance of waiting time random variables in the textile waste treatment process.}
\begin{tabular}{ ccc } 
\hline\noalign{\smallskip}
    & Mean & Variance \\   
    \noalign{\smallskip}\hline\noalign{\smallskip}
  $T_{12}$ & 1.2500 & 0.3125 \\
  $T_{21}$ & 2.0769 & 9.0765 \\
  $T_{23}$ & 2.7300 & 9.4618 \\ \hline
  $T_{13*}$ & 67.191 & 4394.1 \\ 
  \noalign{\smallskip}\hline
\end{tabular}

  \label{tab1}
\end{table}
These quantities allow us to calculate the number $N$ we need to use to obtain a desired error bound.  We choose $\varepsilon = 10^{-6}$ which implies $k \geq 66356$.  Therefore if $N=2^{17}$ the pointwise error for each probability will be smaller than 2$\varepsilon$.  The computation time of calculating the first passage PMF is less than one second (on a Windows 7 PC with an AMD Athlon$\rm {}^{TM}$ 7750 2.7 GHz dual-core processor).

The plot in Figure \ref{fig:3} shows what the first passage PMF from state $1$ to state $3$ looks like for the first one hundred points.  Using the developed theory in this paper, we are assured the pointwise error is less than or equal to 2$\varepsilon$.  Again this ignores the error introduced by multiplying and dividing the FTs (as in Equation \ref{eq:ftpas}) and the machine round-off error.

\begin{figure}[ht]
\begin{center}
\includegraphics[width=5in]{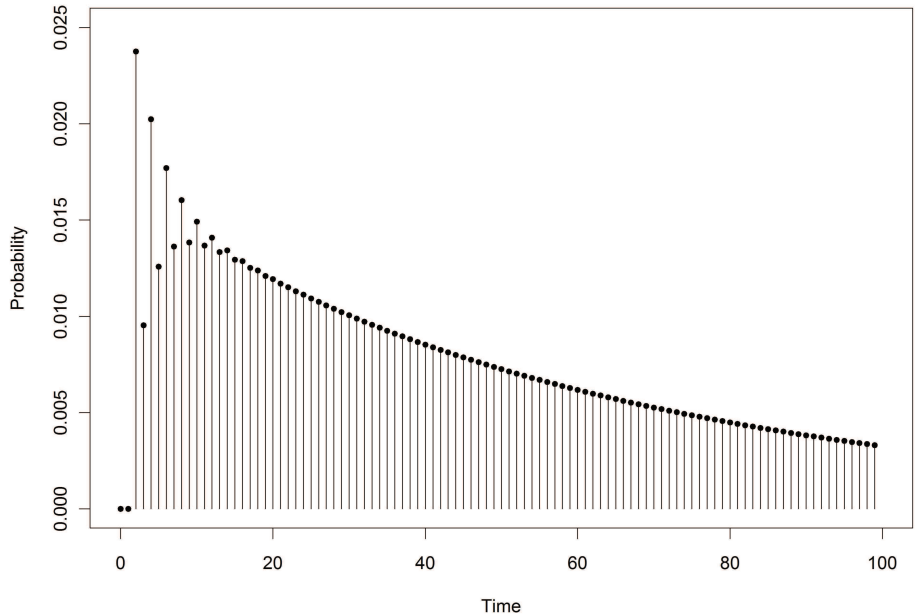}
\caption{The PMF of the first passage from state 1, ``Working", to state 3, ``Treatment failed, tank full".}
\label{fig:3}
\end{center}
\end{figure}

Figure \ref{fig:3} indicates that the PMF of this first passage distribution is monotonically decreasing after roughly time $20$.  Therefore we can attempt to apply Theorem \ref{thm2}.  Since we estimated that $P(T_{13*}=609)<\varepsilon$ Therefore if we choose $N=1218$ we only need to calculate roughly $1\%$ of the values in the FFT and can still claim the pointwise error is less than or equal to 2$\varepsilon$.  This example shows the potential computation savings of using Theorem \ref{thm2} for first passage random variables.  Also, Theorem \ref{thm2} does not assume a finite mean or variance, but a monotonically decreasing PMF after a selected point on the support.  In this example we worked our way, by trial and error, from a large $N$ down to a smaller $N$, however, we are also able to work our way up from a small $N$ to a larger $N$.  This can reduce the computation time to find the desired error bound.

\section{Discussion}
It has been our goal to show for lattice random variables the DFT/FFT computes the desired first passage PMF extremely quickly and to a prescribed accuracy.  It is often very desirable to handle probability problems in the transformed CF domain, but computational methods to invert them back into the time or spatial domain have been the largest deterrent.  The DFT is a excellent solution for the inversion of characteristic functions to PMFs.  The DFT is very fast using the FFT algorithm and it also provides error bounds for estimates. 

\section{Acknowledgments}

The authors would like to thank Patrick Chapin, David Collins and Aparna Huzurbazar for their advice, assistance, and encouragement.

\section{Appendix -- R Code}

\begin{verbatim}
##Discrete Weibull PMF
dweibulldisc<-function(x,q,b){
  temp1=q^((x-1)^b)-q^(x^b); temp2=x-floor(x)
  temp3=temp2<1e-15 ;  temp4<-x==0; temp1[temp4]<-0
  return(temp1*temp3)
}

##p is the probability of the treatment facility will be fixed 
## before the holding tank becomes full
p = 0.95

N = 2^17
support = 0:(N-1)

## The DFT samples of the two of the transition distributions
T21 = dweibulldisc(support,0.3,0.5)
T23 = dweibulldisc(support,0.5,0.7)

## The Fourier transforms of the 3 transition distributions
F12 = p*exp(-2*pi*1i*support/N)/(1-(1-p)*exp(-2*pi*1i*support/N))
F21 = fft(T21);  F23 = fft(T23)

## The Fourier transform of the first passage distribution from
## state 1 to state 3
F_1st_Pas_13 = ((1-p)*F12*F23)/(1-p*F12*F21)

## The first passage PMF from state 1 to state 3
T_1st_Pas_13 = 1/N*Re( fft(F_1st_Pas_13,inverse=T) ) 
\end{verbatim}

\end{document}